\newcommand{\xEQP}{\EQP\xspace}
\newcommand{\xP}{\P\xspace}
\newcommand{\EQPk}{$\text{\EQP}_K$\xspace}
\newcommand{\EQPC}{$\text{\EQP}_\mathbb{C}$\xspace}
\newcounter{mycounter}                                                                                                  
\newenvironment{noindqlist} {\begin{list}{$\ket{\arabic{mycounter}}$~~}{\usecounter{mycounter} \labelsep=0em \labelwidth=0em \leftmargin=0em \itemindent=0em}} {\end{list}}
\newtheorem{defn}{Definition}
\newtheorem{lemma}{Lemma}
\newcounter{TableCounter}
\providecommand{\jacobi}[2]{\left( \displaystyle\frac{#1}{#2} \right)}
\providecommand{\Z}{\mathbb{Z}}
\providecommand{\Zp}{\Z/p\Z}
\providecommand{\ket}[1]{\left| #1 \right\rangle}
\begin{document}
\title{Quantum Computers Can Find Quadratic Nonresidues in
  Deterministic Polynomial Time}

\author{Thomas~G.~Draper%,~\IEEEmembership{}%
\IEEEcompsocitemizethanks{\IEEEcompsocthanksitem T. Draper is with the Center for Communications Research at La Jolla,
4320 Westerra Court, San Diego, CA, 92121.\protect\\
E-mail: tdraper@ccrwest.org}
}

\IEEEtitleabstractindextext{%
\begin{abstract}
An integer $a$ is a quadratic nonresidue for a prime $p$ if $x^2 \equiv a \bmod p$ has no solution.
Quadratic nonresidues may be found by probabilistic methods in polynomial time.
However, without assuming the Generalized Riemann Hypothesis, no deterministic polynomial-time algorithm is known.
We present a quantum algorithm which generates a random quadratic nonresidue in deterministic polynomial time.
\end{abstract}

% TGD
% Note that keywords are not normally used for peerreview papers.
% \begin{IEEEkeywords}
  % Quantum Computing, Complexity Theory, Quadratic Nonresidues, EQP
% Computer Society, IEEE, IEEEtran, journal, \LaTeX, paper, template.
% \end{IEEEkeywords}
}

% make the title area
\maketitle
\IEEEdisplaynontitleabstractindextext
\IEEEpeerreviewmaketitle
\IEEEraisesectionheading{\section{Introduction}\label{sec:introduction}}

\IEEEPARstart{F}{inding} a quadratic nonresidue modulo a prime $p$ is an easy task.
A random choice of $x \in \{1, \ldots , p-1\}$ has a 50\% chance of being a quadratic nonresidue.
If $p$ is an $n$-bit prime, the Legendre symbol %${\scriptscriptstyle \jacobi{x}{p}}$
can be computed using $\log n$ multiplications, and thus in $O(M(n)\log
n)=O(n\log^2 n)$ time ~\cite{DBLP:journals/corr/abs-1004-2091,harvey:hal-02070778}.
The result certifies whether or not $x$ is a quadratic nonresidue.
% can be computed in $O(n\log^2 n)$ time~\cite{DBLP:journals/corr/abs-1004-2091} and certifies whether or not $x$ is a quadratic nonresidue.
Surprisingly, there is no known method for deterministically generating a quadratic nonresidue in polynomial time.
Assuming the Generalized Riemann Hypothesis, there exists a quadratic nonresidue less than $O(\log^2 p)$~\cite[p. 34]{1993--cohen}, that can therefore be found deterministically in $O(n^3\log^2 n)$ time by incremental search.
This paper presents a quantum algorithm that finds a quadratic nonresidue in $O(n\log^2 n)$ deterministic time, independent of the Riemann Hypothesis.

\section{Quadratic Residues}
\begin{defn}
  Let $a,p \in \Z$ where $gcd(a,p)=1$.
  If $x^2 \equiv a \bmod{p}$ has a solution, then $a$ is a \emph{quadratic residue} modulo $p$.
  Otherwise, $a$ is a \emph{quadratic nonresidue} modulo $p$.                                                           \end{defn}
                                                                                                                        \begin{defn}                                                                                                              Let $p$ be an odd prime and $a\in \Z$.                                                                                  The {\bf Legendre symbol} is defined as
  \begin{equation}                                                                                                          \jacobi{a}{p}=                                                                                                        \begin{cases}                                                                                                             1&\text{if }a\text{ is a quadratic residue}\bmod p,\\
    -1&\text{if }a\text{ is a quadratic nonresidue}\bmod p,\\                                                               0&\text{if }a\equiv 0\bmod{p}.\\                                                                                      \end{cases}                                                                                                           \end{equation}                                                                                                          \end{defn}

Recall the following facts about quadratic residues for an odd prime $p$:
\begin{equation}\label{pos}
    \left|\left\{a\in\Zp : \jacobi{a}{p}=1\right\}\right|= \frac{p-1}{2}
  \end{equation}
\begin{equation}\label{neg}
    \left|\left\{a\in\Zp : \jacobi{a}{p}=-1\right\}\right|= \frac{p-1}{2}
  \end{equation}
  \begin{equation}\label{qnr1}
    \jacobi{-1}{p}=(-1)^{\frac{p-1}{2}}=
      \begin{cases}
        1 & \text{if } p\equiv 1\bmod{4}\\
        -1 & \text{if } p\equiv 3\bmod{4}\\
      \end{cases}
  \end{equation}
  \begin{equation}\label{qnr2}
    \jacobi{2}{p}=(-1)^{\frac{p^2-1}{8}}=
      \begin{cases}
        1 & \text{if } p\equiv 1,7\bmod{8}\\
        -1 & \text{if } p\equiv 3,5\bmod{8}\\
      \end{cases}
  \end{equation}

Using equations (\ref{qnr1}) and (\ref{qnr2}), we see that $-1$ or $2$ is a quadratic nonresidue unless $p\equiv 1\bmod{8}$.
Thus the difficult primes for finding quadratic nonresidues are congruent to $1$ modulo $8$.

\begin{lemma}
  Let $p$ be prime such that $p \equiv 1 \bmod{4}$.
  If we take $1,2,\ldots,p-1$ as our nonzero congruence class representatives,
  then half of the quadratic nonresidues even and half are odd.
  % Equivalently,
\end{lemma}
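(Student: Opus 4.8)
The plan is to exploit the hypothesis $p \equiv 1 \bmod 4$ through equation~(\ref{qnr1}), which tells us that $\jacobi{-1}{p} = 1$; that is, $-1$ is a quadratic residue. The multiplicativity of the Legendre symbol then gives, for every $a$ with $\gcd(a,p)=1$,
\begin{equation}
  \jacobi{-a}{p} = \jacobi{-1}{p}\jacobi{a}{p} = \jacobi{a}{p},
\end{equation}
so negation modulo $p$ preserves the quadratic-nonresidue property. This is the structural fact I would build the whole argument on.

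Next I would introduce the involution $\sigma \colon a \mapsto p - a$ on the representative set $\{1, 2, \ldots, p-1\}$. Since $p - a \equiv -a \bmod p$, the displayed identity shows that $a$ is a quadratic nonresidue if and only if $\sigma(a) = p - a$ is one; thus $\sigma$ restricts to a self-map of the set of nonresidues. Because $p$ is odd, $p - a$ has parity opposite to that of $a$, so $\sigma$ always swaps an even representative with an odd one.

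The key point I would then verify is that $\sigma$ has no fixed point on $\{1, \ldots, p-1\}$: the equation $a = p - a$ forces $2a = p$, which is impossible for odd $p$. Hence $\sigma$ partitions the quadratic nonresidues into disjoint two-element orbits $\{a, p-a\}$, each containing exactly one even and one odd element. Counting orbits then shows that the even nonresidues and the odd nonresidues are equinumerous, and by equation~(\ref{neg}) each class has size $\tfrac{1}{2}\cdot\tfrac{p-1}{2} = \tfrac{p-1}{4}$, which is an integer precisely because $p \equiv 1 \bmod 4$.

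I do not expect a serious obstacle here; the only step that genuinely requires the full hypothesis (rather than merely that $p$ is an odd prime) is the symbol-preservation identity, since for $p \equiv 3 \bmod 4$ negation would instead flip the Legendre symbol and the pairing would match even nonresidues with odd \emph{residues}. Ensuring that $\jacobi{-1}{p}=1$ is invoked in exactly the right place is therefore the main subtlety to watch.
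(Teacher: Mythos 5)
Your proposal is correct and follows essentially the same argument as the paper: both use equation~(\ref{qnr1}) to conclude that $-1$ is a quadratic residue, pair each nonresidue $x$ with $p-x$ (noting the pair has opposite parities and no fixed points since $p$ is odd), and then count via equation~(\ref{neg}) to get $\tfrac{p-1}{4}$ nonresidues of each parity. Your phrasing in terms of an explicit involution $\sigma$ and its two-element orbits is just a more formal packaging of the paper's pairing argument.
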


\begin{proof}
  By (\ref{qnr1}), $-1$ is a quadratic residue when $p \equiv 1 \bmod{4}$.
  Thus if $x$ is a quadratic nonresidue, then $-x\equiv p-x\bmod{p}$ is also a quadratic nonresidue.
  Since $p$ is odd, $x$ and $p-x$ have different parities, and $x\neq p-x$.
  % Since $p$ is odd, if $x$ is odd then $p-x$ is even.
  % Likewise if $x$ is even then $p-x$ is odd.
  % In addition, since $p$ is odd, for every $x$, $x\neq p-x$.
  Therefore, every odd quadratic nonresidue has a unique matching even nonresidue.
  Using (\ref{neg}), the number of nonresidues is $\frac{p-1}{2}$.
  Thus the number of odd (even) nonresidues is $\frac{p-1}{4}$.
  % The proof is identical in the case of residues as opposed to nonresidues.
\end{proof}

\section{Grover's Algorithm}
In 1996, Lov Grover~\cite{Grover96} devised an unstructured search algorithm using amplitude amplification. In the case of finding a single marked entry out of $N$, Grover's algorithm reduces the number of black box queries from $O(N)$ to $O(\sqrt{N})$. When the acceptable answer space has size $k$, the expected number of black box queries drops from $O(N/k)$ to $O(\sqrt{N/k})$.

Each step of Grover's algorithm involves flipping the signs of the amplitudes of the marked states and then inverting about the mean.

\subsection{Inversion about the mean}
Although Grover's algorithm uses only real amplitudes, the inversion about the mean holds for complex numbers and is a 2D inversion about a point.
Let $\bar{\alpha}$ be the mean of the amplitudes of a given state.
\[\bar{\alpha} = \frac{1}{2^n}\sum_{x=0}^{2^n-1}\alpha_x.\]
The inversion step of Grover's algorithm proceeds as follows.
\begin{enumerate}
  \item Initial state before inversion
    \[ \displaystyle\sum_{x=0}^{2^n-1}\alpha_x\ket{x}
    \]
  \item Quantum Hadamard Transform
    \[ \displaystyle\sum_{x=0}^{2^n-1}\frac{\alpha_x}{2^{n/2}}\sum_{y=0}^{2^n-1}(-1)^{x\cdot{}y}\ket{y}
    \]
  \item Negate $\ket{0}$
    \[
      \displaystyle{\left(\sum_{x=0}^{2^n-1}\frac{\alpha_x}{2^{n/2}}\sum_{y=0}^{2^n-1}(-1)^{x\cdot{}y}\ket{y}\right)-\sum_{x=0}^{2^n-1}\frac{\alpha_x}{2^{n/2}}\cdot{}2\ket{0}}
    \]
  \item Quantum Hadamard Transform
  \[
    \displaystyle\sum_{x=0}^{2^n-1}\alpha_x\ket{x}-2\bar{\alpha}\sum_{x=0}^{2^n-1}(-1)^{x\cdot{}0}\ket{x}
    \]
  \item Change global phase by $-1$ (NOP)
  \[
    \displaystyle\sum_{x=0}^{2^n-1}\left(2\bar{\alpha}-\alpha_x\right)\ket{x}
    \]
\end{enumerate}

Thus inversion about the mean maps each amplitude
\[\alpha_x \mapsto 2\bar{\alpha}-\alpha_x.\]
In using a inversion about the mean to find quadratic nonresidues, we will use complex amplitudes.

\subsection{Example: Grover's algorithm on four states}
Considering the special case of searching 4 states for a single marked state, only one Grover iteration is needed.
\begin{enumerate}
\item Initialize register in an equal superposition.

  \begin{figure}[h]
\centering
  \includegraphics[width=2.5in]{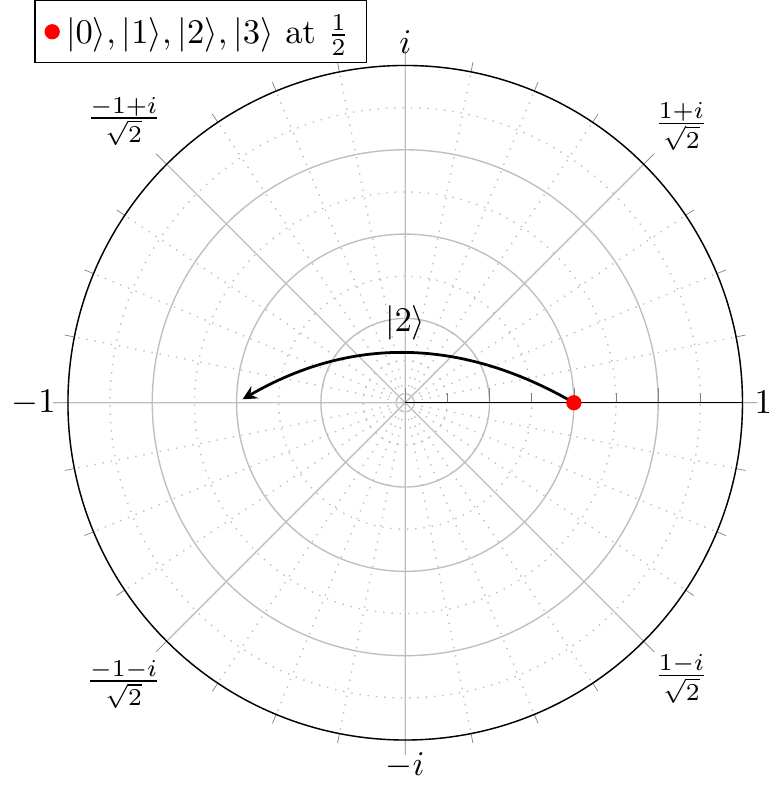}
  \caption{From equal superposition, negate $\ket{2}$ state.}
\label{grover_1}
\end{figure}

\item Negate phase according to indicator function (Figure~\ref{grover_1}). In this case, $\ket{2}$ is the target state.

  \begin{figure}[h]
\centering
  \includegraphics[width=2.5in]{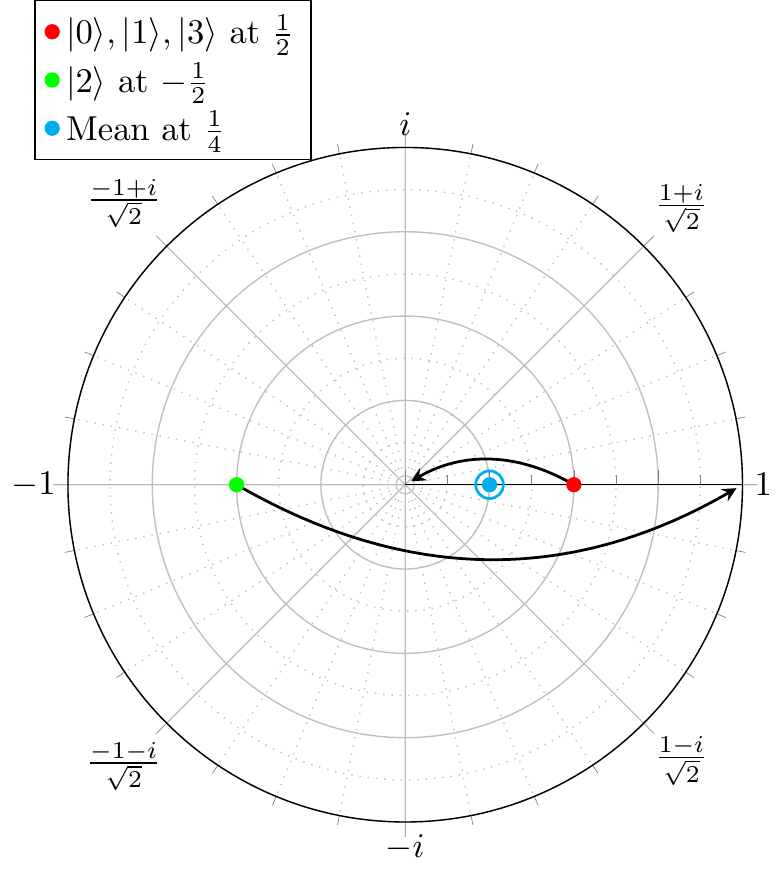}
    \caption{invert about the mean.}
\label{grover_2}
\end{figure}

\item invert all amplitudes about the mean of $1/4$ (Figure~\ref{grover_2}).

  \begin{figure}[h]
\centering
  \includegraphics[width=2.5in]{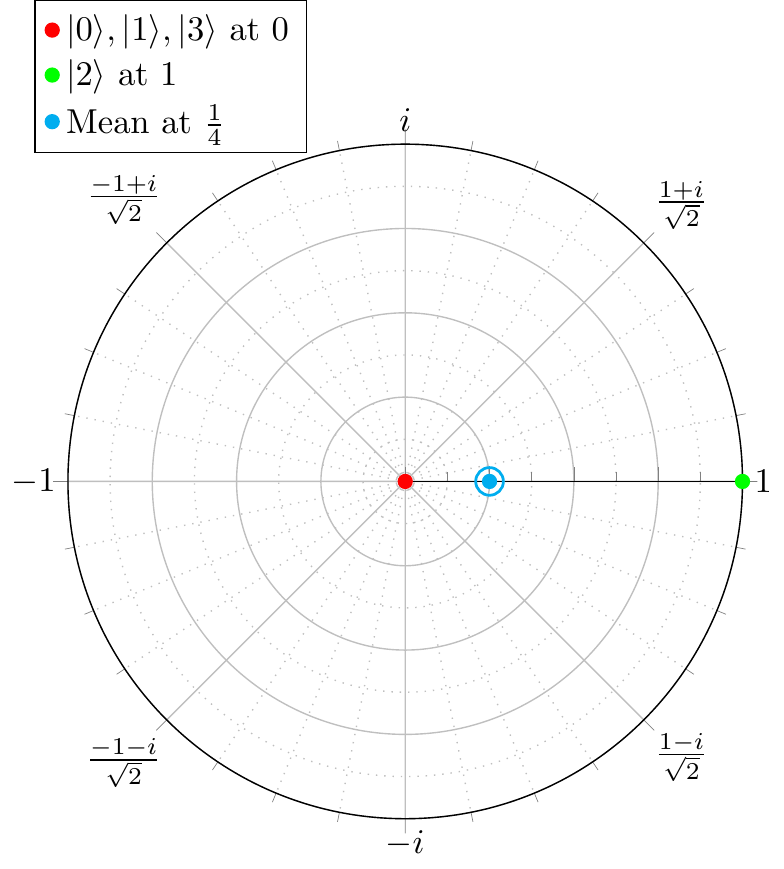}
    \caption{Final amplitudes, where only $\ket{2}$ is nonzero.}
\label{grover_3}
\end{figure}
    \item {After inversion, only $\ket{2}$ has nonzero amplitude.}
\end{enumerate}

At this point, all of the probability amplitude is associated with $\ket{2}$ and
the other three states have zero chance of being observed (Figure~\ref{grover_3}).
Observation at this point yields $\ket{2}$ with probability $1$.

\section{Amplify Quadratic Nonresidues}

Let $n$ be the least integer such that $2^n>p$ and let $N=2^n$.
By varying the Grover step slightly, we may deterministically generate quadratic nonresidues for $p\equiv 1\bmod{8}$.
Instead of negating the phase, we will rotate the complex phase of the even and
odd quadratic nonresidues in such a way that the mean moves to a position that
will send all non-nonresidues ($\ket{0}$, residues and values greater than or equal to $p$  but less than $N$) to zero upon inversion.
It is important to restrict the rotation to quadratic nonresidues less than $p$.
The distribution of residues and nonresidues from $p$ to $2^n-1$ is unknown and is not guaranteed to be equally partitioned.
% No phase rotation will be applied to $\ket{0}$.

Note that after the inversion, all nonresidues will have equal probability of being seen, and that we have no control over which one we observe.

\subsection{Example: Amplify the nonresidues for \protect{\boldmath{$p=41$}}, \protect{\boldmath{$N=64$}}}
In this example, we will use QNR to denote {\it quadratic nonresidues}.
\begin{enumerate}
  \item Initialize all $64$ values ($\ket{0},\ldots ,\ket{63}$) in an equal superposition.
    Since all values have an amplitude of $1/8$, the target mean will be $1/16$.
    Since $p=41$, there are $20$ QNRs, of which $10$ are even and $10$ are odd.

\begin{figure}[h]
\centering
  \includegraphics[width=2.5in]{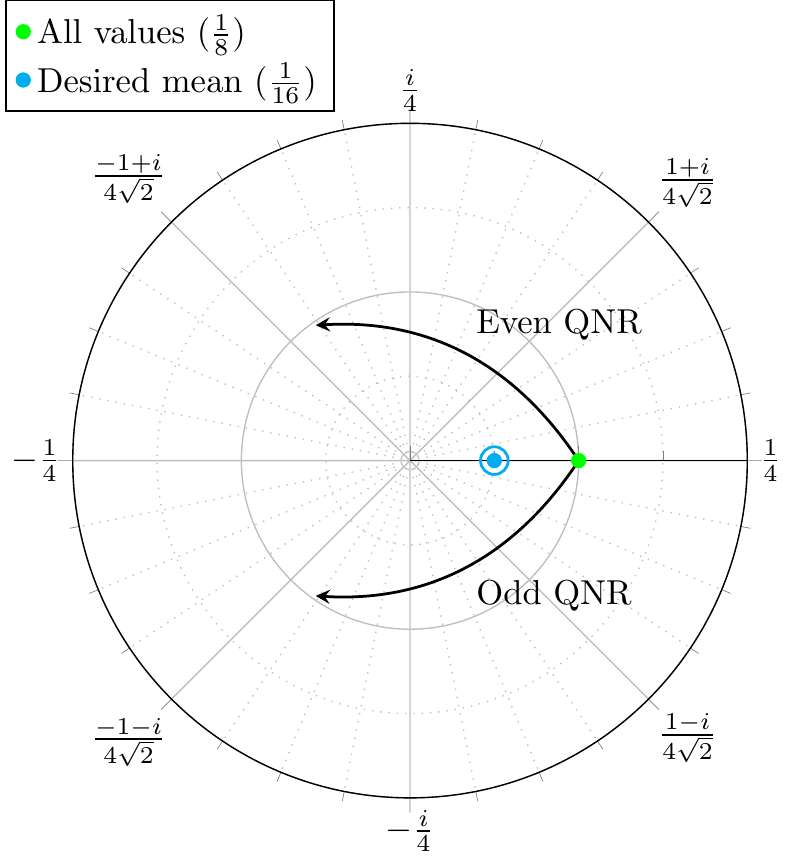}
    \caption{Rotate QNRs, even by $\theta$ and odd by $-\theta$.}
\label{p41_1}
\end{figure}

\item Let $\theta=\hbox{arccos}(-3/5)$. The even QNRs will be rotated by $\theta$ and the odd QNRs will be rotated by $-\theta$ (Figure~\ref{p41_1}).
  After rotation, the even/odd QNR amplitudes are $-\frac{3}{40}\pm\frac{i}{10}$.
  In calculating the mean, the imaginary components of the even and odd
  quadratic nonresidues will exactly cancel each other.
  The remaining average on the real line is thus
  \[\frac{1}{64}\left(20\cdot(-\frac{3}{40})+44\cdot\frac{1}{8}\right).\]
  Therefore, the mean is $\frac{1}{16}$.

    % \[\frac{10}{64}\cdot\frac{-3+4i}{40}+\frac{10}{64}\cdot\frac{-3-4i}{40}+\frac{44}{64}\cdot\frac{1}{8}=\frac{1}{16}.  \]
\begin{figure}[h]
\centering
  \includegraphics[width=2.5in]{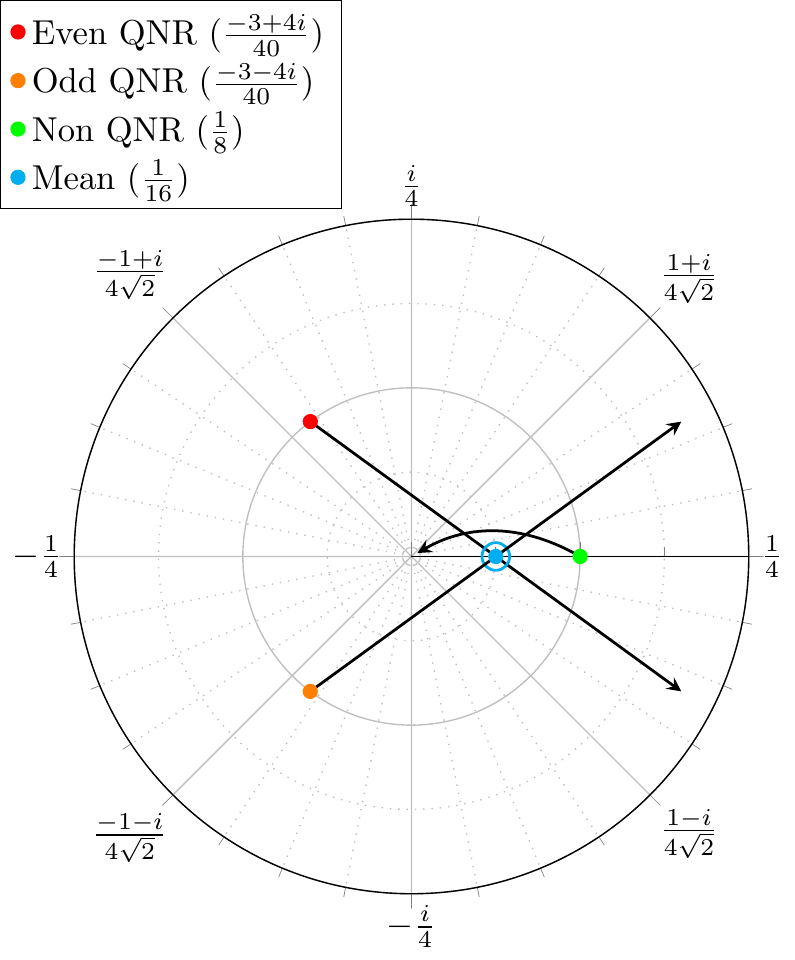}
    \caption{Invert about the mean.}
\label{p41_2}
\end{figure}
\item Invert all complex amplitudes about the mean. Each amplitude $a$ will invert to $2(\frac{1}{16})-a$ (Figure~\ref{p41_2}). In particular, $\frac{1}{8}$ inverts to zero, and the other amplitudes do not.

\begin{figure}[h]
\centering
  \includegraphics[width=2.5in]{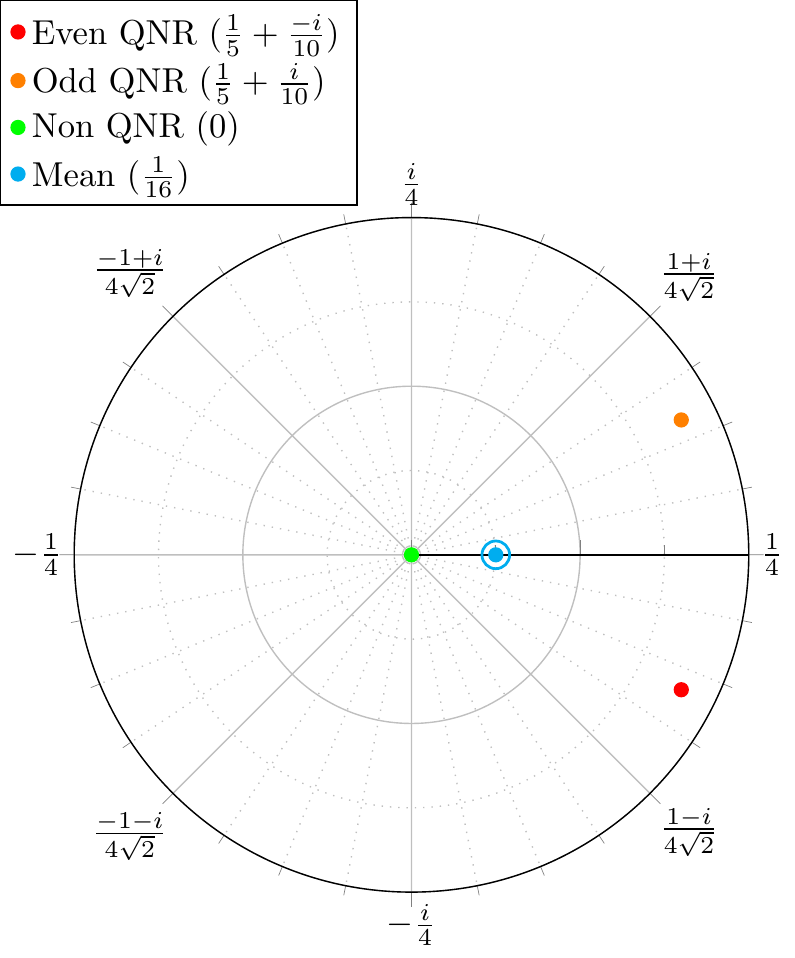}
    \caption{Final position where only QNRs have nonzero (albeit complex) amplitude.}
\label{p41_3}
\end{figure}
\end{enumerate}

After inversion, all of the nonresidues are equally likely to be observed
(Figure~\ref{p41_3}).
Although their phases are different, the magnitudes of their amplitudes are identical, and the magnitude dictates the probability of observation.

\subsection{Computing the angle of rotation}
To find the correct angle of rotation for a given prime, we need to find the angle which moves the mean to half of the length of the superposition vectors.
\[ a\left(\frac{p-1}{2}\right) + 1\cdot\left(2^n-\frac{p-1}{2}\right)= \frac{1}{2}\cdot 2^n\]
Solving for $a$ gives
\[ a = 1- \frac{2^n}{p-1} \]
and thus the angle of rotation is
\begin{align}
  \label{theta}
\theta=\mbox{arccos}\left(1-\frac{2^n}{p-1}\right). 
\end{align}

\subsection{Amplitude before and after inversion}
In an equal superposition of $N=2^n$ states, each state has amplitude $1/\sqrt{N}$. 
Using $\theta$ from equation \eqref{theta},
\begin{align*}
  \cos \theta &= 1-\frac{N}{p-1}\\
  \sin \theta &= \sqrt{1-(1-\frac{N}{p-1})^2}\\
              &= \sqrt{\frac{2N}{p-1}-\frac{N^2}{(p-1)^2}}\\
  &= \frac{\sqrt{N(2p-2-N)}}{p-1}
\end{align*}
After a rotation by $\pm\theta$, the quadratic nonresidues have amplitude
  \[ \frac{1}{\sqrt{N}}\left(\cos \theta \pm i\sin \theta\right). \]

  Since $\theta$ was chosen to move the amplitude mean from $\frac{1}{\sqrt{N}}$ to $\frac{1}{2\sqrt{N}}$, inversion about the mean maps the nonresidue amplitudes to 
\begin{align*}
  &2\frac{1}{2\sqrt{N}}-\frac{1}{\sqrt{N}}\left(\cos \theta\pm i\sin \theta\right)\\
  =&\frac{1}{\sqrt{N}}\left(1-1+\frac{N}{p-1}\mp i\frac{\sqrt{N(2p-2-N)}}{p-1}\right)\\
  =&\frac{\sqrt{N}}{p-1}\mp i\frac{\sqrt{2p-2-N}}{p-1}.
\end{align*}
Since over half of the amplitudes were sent to $0$ in the inversion, the nonresidue amplitude length must have increased.
The new amplitudes of the $\frac{p-1}{2}$ nonresidues have squared length
\begin{align*}
  &\left(\frac{\sqrt{N}}{p-1}\right)^2+\left(\frac{\sqrt{(2p-2-N)}}{p-1}\right)^2\\
  =&\frac{N}{(p-1)^2}+\frac{2p-2-N}{(p-1)^2}\\
  =&\frac{2}{p-1}.
\end{align*}

\section{Algorithmic Complexity}
\subsection{Complexity classes}
Early attempts to find a quantum analog for \xP led to the definition of Exact Quantum Polynomial Time (\xEQP)~\cite{Bernstein97quantumcomplexity}.
Unfortunately, which algorithms \xEQP deemed deterministic polynomial time depended on the finite generating set chosen.
In practice, researchers found \BQP, the quantum analog of \BPP, more useful for discussing what a quantum computer can do efficiently.

In an effort to talk about what a quantum computer can do efficiently and deterministically, the complexity class \EQPk was introduced~\cite{Adleman97quantumcomputability}.
\EQPk allows for controlled unitary gates $U$ on a single qubit where the coefficients of $U$ are from a set $K$.
In practice, the interesting cases come when $K$ is infinite.
For finding quadratic nonresidues, we are interested in the case where $K=\mathbb{C}$.
Note that the coefficients for $U$ may be drawn from $\bar{\mathbb{Q}}$ instead of $\mathbb{C}$~\cite{Adleman97quantumcomputability}.
The discrete logarithm problem over $\mathbb{Z}/p\mathbb{Z}$ is in $\EQP_{\bar{\mathbb{Q}}}$ ~\cite{Mosca04exactquantum}.

Although using an infinite generating set might seem bad, the Solovay-Kitaev theorem~\cite{dawson2005solovaykitaev} proves the single qubit unitary gates can be approximated with exponential precision in polynomial time from a reasonable finite generating set. A more efficient implementation of this idea can be found in \cite{selinger2012efficient}.

We will show that the quadratic nonresidue algorithm is in $\EQP_{\mathbb{C}}$.

\subsection{QC Algorithm for Finding QNRs in Deterministic Polynomial Time}
\label{qc_alg}
Let $M(n)$ be the time required to multiply two $n$ bit numbers. 
The entire algorithm will have complexity equal to the Jacobi symbol, $O(M(n)\log n)=O(n \log^2 n)$~\cite{DBLP:journals/corr/abs-1004-2091,harvey:hal-02070778}.

Given a prime $p\equiv 1\bmod 8$, we define the following for use in algorithm~\ref{quantum_qnr}.
\[ N/2 < p < N=2^n, \]
\[ \theta=\mbox{arccos}\left(1-\frac{N}{p-1}\right), \]
\[ x_0 \text{ is the low bit of } x, \]
\begin{align*}                                                                                                             \label{indicator}
  f(x)&=\left[\jacobi{x}{p}=-1\text{ and } 0\le x<p\right]\\                                                              &=
  \begin{cases}                                                                                                               1&\text{if }x<p\text{ is a quadratic nonresidue},\\                                                                   0&\text{otherwise}.\\
  \end{cases}
\end{align*}
Even though we want to rotate the even QNRs by $\theta$ and the odd QNRs by $-\theta$, for most quantum computers it is likely to be less work to rotate the odd QNRs by $-2\theta$ and then rotate all QNRs by $\theta$.
This is reflected in the algorithm.

\begin{algorithm}
  \caption{
    Deterministic polynomial time algorithm for uniformly sampling quadratic nonresidues of a prime $p\equiv 1 \bmod 8$.
    Let $n, N, \theta, x_0 \text{ and } f(x)$ be defined as in section~\ref{qc_alg}.
    % Let $f$ be as defined in equation~\eqref{indicator} and $\theta=\mbox{arccos}\left(1-\frac{N}{p-1}\right)$.
}
  \label{quantum_qnr}
  \begin{noindqlist}
  \item $[O(n)]$ Apply $H^{\otimes n}$ to \( \ket{0}^{\otimes n} \) (Hadamard transform).
    \[ \frac{1}{\sqrt{N}}\sum_{x=0}^{N-1} \ket{x} \]
  \item $[O(M(n) \log n)]$ Compute Jacobi symbol indicator.
\[\frac{1}{\sqrt{N}} \sum_{x=0}^{N-1} \ket{x}\ket{\left[\jacobi{x}{p}=-1\right]}\]
\item $[O(n)]$ Compute the indicator for $[x<p]$~\cite{comparator}.
\[\frac{1}{\sqrt{N}} \sum_{x=0}^{N-1} \ket{x}\ket{\left[\jacobi{x}{p}=-1\right]}\ket{[x<p]}\]
  \item $[O(1)]$ Rotate odd QNRs less than $p$ by $-2\theta$, conditioned on $[x<p],\left[\jacobi{x}{p}=-1\right],$ and $x_0$.
  \[\frac{1}{\sqrt{N}} \sum_{x=0}^{N-1} e^{-i2\theta f(x)x_0}\ket{x}\ket{\left[\jacobi{x}{p}=-1\right]}\ket{[x<p]}\]
\item $[O(1)]$ Rotate all QNRs less than $p$ by $\theta$, conditioned on $[x<p]$ and $\left[\jacobi{x}{p}=-1\right]$.
  \[\frac{1}{\sqrt{N}} \sum_{x=0}^{N-1} e^{i\theta f(x)(1-2x_0)}\ket{x}\ket{\left[\jacobi{x}{p}=-1\right]}\ket{[x<p]}\]
\item $[O(M(n)\log n)]$ Uncompute indicator functions.
  \[\frac{1}{\sqrt{N}} \sum_{x=0}^{N-1} e^{i\theta f(x)(1-2x_0)}\ket{x}\]
\item $[O(n)]$ Use a Grover step to invert about the mean $\alpha=\frac{1}{2\sqrt{N}}$.
    \[ \frac{1}{\sqrt N}\sum_{x=0}^{N-1} (1-e^{i\theta f(x) (1-2x_0)})\ket{x} \]
  \item $[O(n)]$ Observe a quadratic nonresidue modulo $p$.
This observation samples uniformly among all quadratic residues modulo $p$.
  \end{noindqlist}
\end{algorithm}

Note that at the end of algorithm~\ref{quantum_qnr}, the quadratic nonresidues have amplitude
\[ \frac{1}{\sqrt N}\left(1-e^{\pm i\theta}\right). \]

As a sanity check, we verify that the nonresidue amplitudes have squared length

\begin{align*}
  &\frac{1}{N}\left( (1-\cos\theta)^2+(\sin \theta)^2\right)\\
  =&\frac{2}{N}\left(1-\cos\theta\right)\\
  =&\frac{2}{N}\left(\frac{N}{p-1}\right)\\
  =&\frac{2}{p-1}.
\end{align*}

And thus the sum of the amplitude lengths squared of the $\frac{p-1}{2}$ nonresidues is $1$.

\section{Additional applications}
Complex rotations can also be used to make Grover's algorithm for a single marked state $\ket{\omega}$ deterministic.
To do this we add one bit to the search space, but allow $\ket{\omega,0}$ and $\ket{\omega,1}$ to both pass our indicator test.
Proceed to do Grover iterations as normal, until we come to the max.
Instead of stepping beyond the max (like a normal Grover iteration would do), we rotate $\ket{\omega,0}$ by $\theta$ and $\ket{\omega,1}$ by $-\theta$, moving the mean to exactly the position that would send all other states to zero after the next inversion about the mean.
Observation then yields $\ket{\omega,0}$ or $\ket{\omega,1}$, both of which have the answer $\omega$.

Grover's algorithm for a marked set, can likewise be made to run in deterministic time, but the output will be a random element of the marked set.

Cubic nonresidues and higher can also be found in deterministic time with this approach.

Furthermore, a random element of any set of known size can be produced, assuming you have a function which identifies elements of the set.

Interestingly, finding primitive roots is still not known to be in \EQPC because the test identifying primitive roots of $p$ requires factoring $p-1$, and factoring is not known to be in \EQPC.
\section{Conclusion}
Although finding quadratic nonresidues is easy in practice, being able to generate them in deterministic time is something new.
Interestingly, since the distribution of quadratic nonresidues is not completely understood, it may be useful in evaluating NISQ devices.
Essentially, if an approximation of this algorithm creates a distribution from a single Jacobi symbol calculation that has a better sampling distribution for finding quadratic nonresidues than any classical algorithm limited to a single Jacobi symbol calculation, it may present evidence that the NISQ device is doing something quantum.
This idea is explored further in \cite{nisqqnr}.

\bibliographystyle{IEEEtran}  % TGD
\bibliography{QNR_EQP}  % TGD

% Generated by IEEEtran.bst, version: 1.14 (2015/08/26)
\begin{thebibliography}{10}
\providecommand{\url}[1]{#1}
\csname url@samestyle\endcsname
\providecommand{\newblock}{\relax}
\providecommand{\bibinfo}[2]{#2}
\providecommand{\BIBentrySTDinterwordspacing}{\spaceskip=0pt\relax}
\providecommand{\BIBentryALTinterwordstretchfactor}{4}
\providecommand{\BIBentryALTinterwordspacing}{\spaceskip=\fontdimen2\font plus
\BIBentryALTinterwordstretchfactor\fontdimen3\font minus
  \fontdimen4\font\relax}
\providecommand{\BIBforeignlanguage}[2]{{%
\expandafter\ifx\csname l@#1\endcsname\relax
\typeout{** WARNING: IEEEtran.bst: No hyphenation pattern has been}%
\typeout{** loaded for the language `#1'. Using the pattern for}%
\typeout{** the default language instead.}%
\else
\language=\csname l@#1\endcsname
\fi
#2}}
\providecommand{\BIBdecl}{\relax}
\BIBdecl

\bibitem{DBLP:journals/corr/abs-1004-2091}
\BIBentryALTinterwordspacing
R.~P. Brent and P.~Zimmerman, ``An {$O(M(n) \log n)$} algorithm for the jacobi
  symbol,'' \emph{CoRR}, vol. abs/1004.2091, 2010. [Online]. Available:
  \url{http://arxiv.org/abs/1004.2091}
\BIBentrySTDinterwordspacing

\bibitem{harvey:hal-02070778}
\BIBentryALTinterwordspacing
D.~Harvey and J.~Van Der~Hoeven, ``{Integer multiplication in time O(n log
  n)},'' \emph{{Annals of Mathematics}}, 2020. [Online]. Available:
  \url{https://hal.archives-ouvertes.fr/hal-02070778}
\BIBentrySTDinterwordspacing

\bibitem{1993--cohen}
H.~Cohen, \emph{A course in computational algebraic number theory}, ser.
  Graduate Texts in Mathematics.\hskip 1em plus 0.5em minus 0.4em\relax
  Spinger-Verlag, Berlin, 1993, vol. 138.

\bibitem{Grover96}
L.~K. Grover, ``A fast quantum mechanical algorithm for database search,'' in
  \emph{Proceedings of the {T}wenty-Eigth Annual {ACM} Symposium on the
  {T}heory of {C}omputing (Philadelphia, Pennsylvania, USA, 1996)}, May 22--24
  1996, pp. 212--219.

\bibitem{Bernstein97quantumcomplexity}
E.~Bernstein and U.~Vazirani, ``Quantum complexity theory,'' \emph{SIAM JOURNAL
  ON COMPUTING}, pp. 1411--1473, 1997.

\bibitem{Adleman97quantumcomputability}
L.~M. Adleman, J.~Demarrais, and M.~dh~A.~Huang, ``Quantum computability,''
  \emph{SIAM JOURNAL OF COMPUTATION}, pp. 1524--1540, 1997.

\bibitem{Mosca04exactquantum}
M.~Mosca and C.~Zalka, ``Exact quantum {F}ourier transforms and discrete
  logarithm algorithms,'' \emph{International Journal of Quantum Information},
  pp. 91--100, 2004.

\bibitem{dawson2005solovaykitaev}
C.~M. Dawson and M.~A. Nielsen, ``The {S}olovay-{K}itaev algorithm,'' 2005.

\bibitem{selinger2012efficient}
P.~Selinger, ``Efficient {C}lifford{$+T$} approximation of single-qubit
  operators,'' arXiv:quant.ph/1212.6253, Dec. 2012.

\bibitem{comparator}
D.~Oliveira and R.~Ramos, ``Quantum bit string comparator: Circuits and
  applications,'' \emph{Quantum Computers and Computing}, vol.~7, 01 2007.

\bibitem{nisqqnr}
T.~G. Draper, ``Evaluating {NISQ} devices with quadratic nonresidues (to
  appear),'' 2021.

\end{thebibliography}

\end{document}